\documentclass[10pt,journal]{IEEEtran}

\usepackage{times}
\usepackage{amsmath,amssymb}
\usepackage{graphicx}
\usepackage{cite}
\usepackage{url}
\usepackage{graphicx}

\usepackage{xcolor}
\usepackage{tcolorbox}

\newtcolorbox{boxedtext}{
  colback=white,
  colframe=blue,
  boxrule=0.8pt,
  arc=2mm
}

\usepackage{amsthm}

\newtheorem{theorem}{Theorem}
\newtheorem{proposition}{Proposition}
\newtheorem{corollary}{Corollary}

\theoremstyle{definition}

\usepackage{enumitem}

\usepackage{hyperref}
\usepackage{xurl}
\title{From Signals to Causes: A Causal Signal Processing Framework for Robust and Interpretable Clinical Risk Prediction}

\author{Surajit~Das and~Maxine~Tan%
\thanks{Surajit Das is with the Lab Infochemistry Scientific Center, ITMO University, St. Petersburg, Russia (e-mail: mr.surajitdas@gmail.com). ORCID: 0009-0008-6692-6697.}
\thanks{Maxine Tan is with the School of Engineering, Monash University, Malaysia (e-mail: maxine.tan@monash.edu). ORCID: 0000-0001-5071-2477.}
}

\begin{document}
\maketitle

\begin{abstract}
Learning-based signal processing systems increasingly support high-stakes medical decisions using heterogeneous biomedical signals, including medical images, physiological time series, and clinical records. Despite strong predictive performance, many models rely on statistical correlations that are unstable across acquisition settings, patient populations, and institutional practices, limiting robustness, interpretability, and clinical trust.

We advocate a causal signal processing perspective in which biomedical signals are treated as effects of latent generative mechanisms rather than as isolated predictive inputs. Using clinical risk prediction as a motivating example, we show how disease-related factors generate observable biomarkers, while acquisition processes act as confounders influencing signal appearance. In clinical disease risk prediction from chest CT scans and patient risk factors, correlational models may fail under scanner changes, whereas causal abstractions remain invariant.

Building on this view, we propose a unifying conceptual framework integrating causal modeling with learning-based signal processing and neuro-symbolic reasoning. Statistical models extract multimodal representations that are mapped to interpretable causal abstractions and combined with symbolic knowledge encoding clinical risk factors and guidelines. This structure enables clinically grounded explanations, counterfactual reasoning about hypothetical interventions, and improved robustness to distribution shifts arising from changes in acquisition conditions or screening policies.

Rather than introducing a specific algorithm, this article presents schematic causal structures and a comparative analysis of correlation-based, causal, and neuro-symbolic approaches to guide the design of robust and interpretable medical decision-support systems.
\end{abstract}

\section{Introduction}

Learning-based signal processing systems increasingly support high-stakes decision making in medical applications, where models operate on heterogeneous biomedical signals such as medical images, physiological time series, and clinical records \cite{wang2024multimodal, sun2025causal, Tegomoh}. These systems achieve impressive predictive accuracy by exploiting statistical regularities in historical data \cite{wang2024multimodal}. However, such regularities are often specific to particular acquisition settings, patient populations, and institutional practices \cite{han2025distribution}. As a result, models trained on correlational patterns may fail under deployment, exhibit limited interpretability, and provide explanations that are misaligned with clinical reasoning \cite{Tegomoh}.

In medical environments, changes in scanners, imaging protocols, screening policies, and population demographics are common \cite{han2025distribution}. These changes induce distribution shifts that are frequently interpreted as statistical nuisances but, from a causal perspective \cite{Castro1}, correspond to interventions on the data-generating process \cite{bing2023invariance, han2025distribution}. Popular interpretability tools, such as saliency maps and feature attributions, indicate which inputs influence predictions but do not explain why those inputs matter clinically \cite{sokol2025clinical}. This gap between statistical prediction and clinical reasoning limits trust, accountability, and generalization in real-world settings\cite{vangeloven2020prediction, sanchez2022causal}.

Causal inference \cite{Prosperi, vangeloven2020prediction, sanchez2022causal} provides a principled framework for reasoning about data-generating mechanisms and interventions \cite{bing2023invariance}. From a signal processing perspective, a causal view treats biomedical observations as outputs of latent generative processes governed by disease mechanisms, patient characteristics, and acquisition conditions \cite{sun2025causal}. Robust prediction then corresponds to learning relationships that remain invariant under interventions on non-causal variables, while interpretability corresponds to explanations grounded in causal structure rather than in spurious correlations \cite{bing2023invariance, Pearl2009}.

This article reformulates clinical risk prediction as a causal signal processing problem \cite{sun2025causal}. Multimodal biomedical signals are modeled as mixtures of latent causal sources and acquisition channels, analogous to source separation and channel modeling in classical signal processing. Latent disease states act as hidden generators of observable biomarkers, while measurement processes transform these generators into recorded signals. Within this abstraction, the objective shifts from learning a direct mapping between observations and labels to estimating latent causal states that govern both signal appearance and clinical outcomes. Fig.~ \ref{fig:conceptual_comparison} illustrates the conceptual distinction between conventional correlation-based learning and the proposed causal signal processing perspective, highlighting how acquisition-related factors introduce spurious associations that causal abstractions aim to remove

\paragraph*{Scope and Contribution.}
This article provides a tutorial and unifying signal-processing reformulation of causal modeling for clinical risk prediction. Rather than proposing a new algorithm, we synthesize ideas from causal inference, representation learning, and multimodal signal processing into a coherent signal-to-decision framework. The framework clarifies how robustness to distribution shift \cite{felekis2026distributionally}, clinically grounded interpretability, and generalization under intervention can be addressed through causal abstraction and invariant representation learning. By bridging causal modeling with core signal processing concepts such as source separation, channel modeling, and invariance, this work provides design principles for building intervention-aware medical decision-support systems. This article emphasizes conceptual unification rather than algorithmic development or empirical benchmarking.

\paragraph*{Neuro-Symbolic Abstraction.}
Operationalizing causal structure requires explicit representations of clinically meaningful variables such as risk factors, latent disease states, and intervention targets. These abstractions are structured and governed by domain knowledge and clinical guidelines, and cannot be reliably inferred or manipulated using statistical learning alone. Neuro-symbolic reasoning therefore plays a functional role in the proposed framework: statistical models extract uncertain evidence from raw signals, while symbolic representations encode causal relations among abstractions and support counterfactual and intervention-based reasoning. This separation is essential for producing explanations that are stable under changes in acquisition conditions and aligned with clinical decision-making. 

\paragraph*{Article Organization.}
We first review background concepts from causal inference and their relevance to signal processing. We then describe causal structures underlying multimodal clinical risk prediction and interpret distribution shifts as interventions on data-generating mechanisms. Next, we examine implications for robustness and interpretability and introduce neuro-symbolic reasoning as a means of causal abstraction. Finally, we discuss methodological implications and open challenges for causal signal processing in medical decision support.

\begin{figure}[t]
    \centering
    \includegraphics[width=\linewidth]{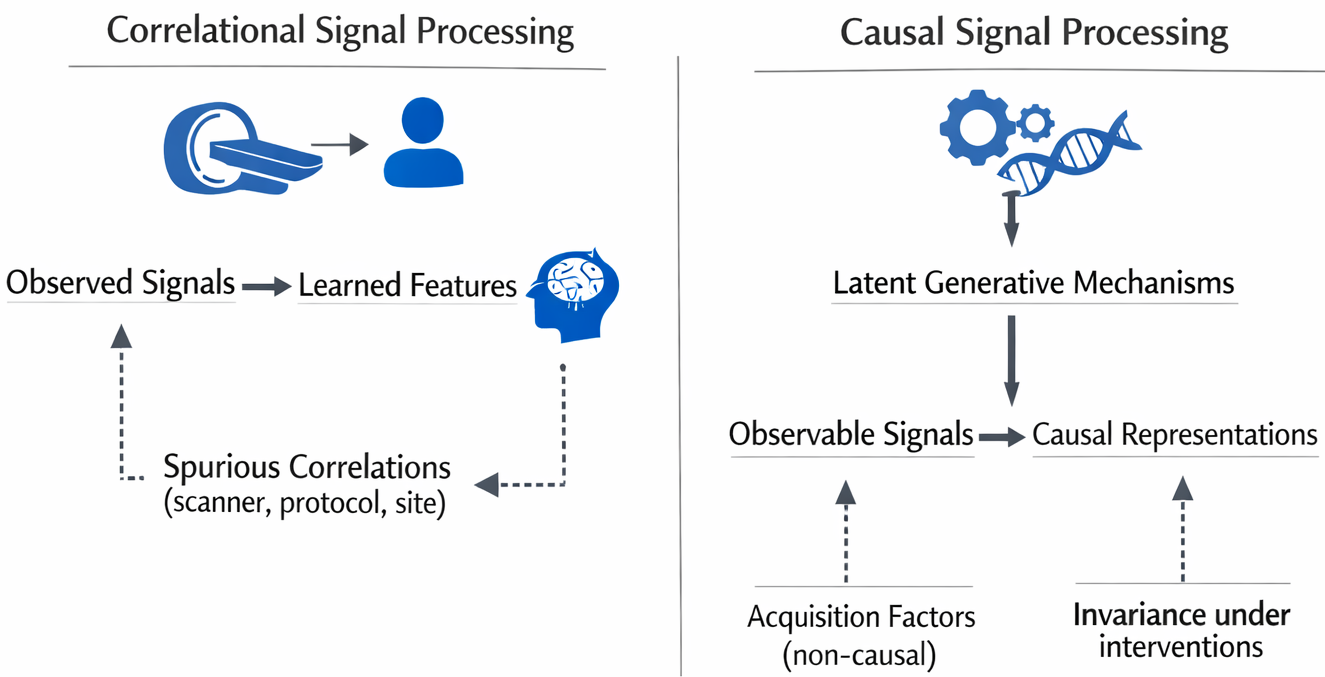}
    \caption{Conceptual comparison between correlational and causal signal processing paradigms. 
Conventional learning-based systems exploit statistical associations in observed signals, 
including spurious correlations induced by acquisition and institutional factors. 
Causal signal processing models signals as outcomes of latent generative mechanisms, 
seeking representations that remain invariant under interventions on non-causal variables.}

    \label{fig:conceptual_comparison}
\end{figure}

%@@@@@@@@@@@@@@@@@@@@@@@@@@@@@@@@@@@@@@@@@@@@@@@@@@

\section{Background: Clinical Risk Prediction and Signal Processing}
\label{sec:background}

Clinical risk prediction aims to estimate the likelihood of future disease outcomes using heterogeneous biomedical data, where heterogeneous biomedical observations are understood as measurements generated by a hierarchy of hidden physiological and pathological processes. This perspective provides a principled decomposition:

\begin{itemize}
    \item \textbf{Observed Signals:} The raw, multi-modal data, including medical images, physiological time series, laboratory measurements, and electronic health records.
    \item \textbf{Latent States:} The unobserved causal variables of interest, such as the true disease state, its progression stage, and the patient's underlying physiological risk profile.
    \item \textbf{Clinical Decisions:} The actionable outputs informed by the estimated latent state, including diagnostic calls, screening strategies, and treatment selections.
\end{itemize}

Recent advances in machine learning have enabled powerful data-driven models that map such observations directly to risk scores or diagnostic labels. These models are typically trained to optimize predictive accuracy on historical datasets and are evaluated under the assumption that training and deployment data are drawn from the same distribution.

From a signal processing perspective, this paradigm treats biomedical observations as input signals and clinical outcomes as target variables, focusing on statistical pattern extraction. While effective within controlled datasets, this formulation overlooks the structured nature of variability in medical data. Observed signals are shaped not only by underlying biological processes but also by acquisition devices, protocols, and institutional practices. As a result, predictive patterns learned from data may reflect confounding influences rather than true disease mechanisms.

\subsection{Conventional Learning-Based Formulation}

In most contemporary systems, clinical risk prediction is formulated as a supervised learning problem:
\[
f: \mathbf{x} \rightarrow y,
\]
where $\mathbf{x}$ denotes multimodal observations (e.g., images and clinical variables) and $y$ denotes a clinical outcome or risk score. Model training emphasizes minimizing empirical error under the assumption that correlations present in the training set will persist at deployment.

This formulation implicitly assumes that all predictive structure in $\mathbf{x}$ is equally informative. Consequently, models may exploit correlations induced by scanner type, site-specific protocols, or population composition, even when these factors are unrelated to disease etiology. Such dependencies are often invisible during training and only emerge as failures when models are deployed in new environments.

\subsection{Sources of Variability in Medical Signals}

Critically, the variability in clinical data is not merely stochastic noise; it is highly structured and arises from multiple, distinct causal sources:
\begin{itemize}
    \item \textbf{Patient-Level Sources (The Primary Source):} Genetics, age, sex, and lifestyle, which directly modulate the underlying pathological process.
    \item \textbf{Measurement/ Acquisition Sources (The Channel):} Scanner manufacturer, imaging protocol, operator technique, and laboratory assay batch, which transform the true biological signal without altering its source.
    \item \textbf{Clinical History Sources (Temporal Confounding):} Prior treatments, screening history, and disease course, which alter the presentation of the current latent state.
\end{itemize}

Variability in biomedical signals arises from multiple distinct sources, as discussed below:

\begin{itemize}
    \item \textbf{Biological variability}, reflecting true differences in disease state, physiology, and patient risk factors. It is associated with Patient-Level Sources.
    \item \textbf{Measurement variability}, induced by imaging devices, acquisition parameters, and operator practices. It arises from Measurement / Acquisition Sources.
    \item \textbf{Contextual variability}, including prior interventions, screening history, and institutional policies. It arises from clinical history sources.
\end{itemize}

Standard learning-based approaches treat these sources as undifferentiated noise or nuisance variation. However, they have fundamentally different interpretations: biological variability carries causal information about disease, whereas measurement and contextual variability do not. Failure to distinguish between these sources leads to models that are brittle under distribution shift and difficult to interpret clinically.

\subsection{Limitations of Correlation-Based Risk Models}

Correlation-based risk models suffer from three key limitations in clinical settings:

\begin{itemize}
    \item \textbf{Limited robustness:} Performance degrades under changes in scanners, protocols, or patient populations.
    \item \textbf{Weak interpretability:} Explanations identify influential signal patterns but do not clarify their clinical meaning.
    \item \textbf{Poor alignment with clinical reasoning:} Predictions cannot be reliably assessed under hypothetical interventions (e.g., altered patient behavior or screening policy).
\end{itemize}

These limitations arise because the dominant modeling paradigm conflates disease-related variation with acquisition-related and institutional variation. As a result, learned representations may be predictive without being causally meaningful.

\subsection{Motivation for a Causal Signal Processing Perspective}

Signal processing has a long tradition of modeling observed data as mixtures of latent sources transmitted through channels. This perspective naturally suggests a reinterpretation of clinical risk prediction: biomedical signals can be viewed as observations generated by latent physiological and pathological processes and transformed by acquisition channels.

Such a formulation motivates shifting the learning objective from direct prediction to \textbf{latent state estimation}, where the goal is to recover disease-related variables that govern both observable signals and clinical outcomes. This shift provides the conceptual foundation for incorporating causal structure into representation learning, enabling models that prioritize invariance to non-causal variability and support clinically meaningful interpretation.

This background motivates the causal signal processing framework developed in the following sections, which reformulates clinical risk prediction in terms of latent causal sources, structured variability, and intervention-aware inference.

%@@@@@@@@@@@@@@@@@@@@@@@@@@@@@@@@@@@@@@@@@@@@@@@@@@

%@@@@@@@@@@@------3------@@@@@@@@@@@@@@@@@@@@@@@@@@@

\section{A Causal View of Multimodal Medical Signals}
\label{sec:causal_view}

Medical data are generated by structured causal mechanisms rather than independent statistical processes. A causal view models how biological, behavioral, and environmental factors produce observable measurements and clinical outcomes \cite{Pearl2009}. This perspective enables models that generalize across populations, acquisition settings, and clinical interventions.

\subsection{Causal Roles in Medical Data}

In clinical risk prediction, variables assume distinct causal roles:
\begin{itemize}
    \item \textbf{Causal factors:} variables that directly influence disease risk (e.g., smoking, genetics, age).
    \item \textbf{Mediators:} observable manifestations of disease state (e.g., imaging biomarkers, physiological measurements).
    \item \textbf{Confounders:} factors that affect observations without altering disease (e.g., scanner type, protocols, institutional practices).
\end{itemize}

Conventional learning models treat all variables symmetrically and may exploit spurious associations induced by confounders. Causal modeling distinguishes these roles explicitly, clarifying which variables should influence predictions under intervention.

\paragraph{Minimal Structural Causal Model.}
Formally, we represent clinical risk prediction using a structural causal model (SCM)
with latent disease state $D$, causal risk factors $R$, acquisition-related confounders $A$,
observed biomedical signals $X$, and clinical outcome $Y$:
\begin{align}
D &= f_D(R, U_D), \\
X &= f_X(D, A, U_X), \\
Y &= f_Y(D, U_Y),
\end{align}
where $U_D, U_X, U_Y$ denote mutually independent exogenous noise variables.
This structure encodes the assumption that acquisition and institutional factors
affect signal appearance without altering the underlying disease state.

\begin{boxedtext}
\textbf{Box 1: Key Concepts in Causal Signal Processing}
\begin{itemize}
    \item \textbf{Signals as effects:} Observed signals arise from latent generative mechanisms.
    \item \textbf{Latent causal variables:} Disease states act as hidden sources generating observable patterns.
    \item \textbf{Confounding factors:} Acquisition processes affect signals without affecting disease.
    \item \textbf{Interventions:} Changes in specific mechanisms of the data-generating process.
    \item \textbf{Causal invariance:} Stable relationships under non-causal interventions generalize.
\end{itemize}
\end{boxedtext}

\subsection{Causal Structure of Multimodal Risk Prediction}

Lung cancer risk prediction illustrates this structure: behavioral and biological risk factors influence a latent disease state, which generates imaging biomarkers and clinical findings. Acquisition and institutional variables affect signal appearance without causing disease. Represented as a directed acyclic graph, this structure implies that predictions should depend on latent disease variables rather than acquisition artifacts, and that robustness requires invariance to non-causal variables.

\paragraph{Signal Processing View.}
From a signal processing perspective, the observation model can be written as
\begin{equation}
X = g_A\!\left(h(D)\right) + \varepsilon,
\end{equation}
where $h(D)$ denotes a disease-dependent latent signal source,
$g_A(\cdot)$ represents an acquisition-dependent channel operator,
and $\varepsilon$ captures measurement noise.
Robust prediction requires learning representations that are invariant to $g_A$
while preserving information about $D$.

\subsection{Distribution Shift as Intervention}

In medical AI, interventions include changes in imaging devices, screening policies, and deployment populations. Distribution shifts arise from such interventions. Models based on non-causal correlations are sensitive to these changes, whereas causal models seek invariant relationships between latent disease states and clinically meaningful signals.

\subsection{Beyond Correlational Interpretability}

Saliency maps and feature attribution methods indicate which inputs influence predictions but not why those inputs are clinically relevant \cite{Selvaraju2017}. Because they are correlational, they remain sensitive to confounders.

Clinically meaningful explanations must support counterfactual reasoning, such as whether predictions would change if a patient had not smoked or if acquisition protocols differed. Such questions require explicit causal structure rather than statistical importance alone.

\paragraph{Counterfactual Interpretation.}
Changes in causal risk factors $R$ induce corresponding changes in the latent disease state and therefore in the predicted clinical outcome. This relationship can be expressed as
\begin{equation}
Y(r') = f_Y\bigl(f_D(r', U_D), U_Y\bigr),
\end{equation}
where $r'$ denotes a hypothetical modification of patient risk factors.

In contrast, changes in acquisition or institutional variables affect the appearance of observed signals but do not alter the underlying disease mechanism, and therefore do not affect the predicted outcome:
\begin{equation}
Y(a') = Y.
\end{equation}

This distinction formalizes why clinically meaningful explanations must respond to disease-related changes while remaining invariant to variations in acquisition conditions. Causal interpretability is thus evaluated by assessing whether predictions vary appropriately under disease-related changes and remain stable under non-causal perturbations.

%-----------------------------------------------
\iffalse

\paragraph{Counterfactual Interpretation.}
Counterfactual predictions under causal interventions on risk factors $R$ can be expressed as
\begin{equation}
Y_{do(R=r')} = f_Y\!\left(f_D(r', U_D), U_Y\right),
\end{equation}
whereas interventions on acquisition variables satisfy
\begin{equation}
Y_{do(A=a')} = Y.
\end{equation}
This distinction formalizes why clinically meaningful explanations must respond to
causal changes while remaining invariant to acquisition effects.

\fi

%----------------------------------------------

Causal interpretability evaluates whether predictions change under interventions on causal variables and remain invariant under acquisition changes \cite{Samek2021}.

\begin{boxedtext}
\textbf{Box 2: Why Causal Reasoning Matters in Practice}
\begin{itemize}
    \item \textbf{Robustness:} Stability under acquisition changes.
    \item \textbf{Interpretability:} Explanations reflect disease mechanisms.
    \item \textbf{Counterfactual reasoning:} Supports clinically relevant ``what-if'' queries.
    \item \textbf{Trust and accountability:} Makes assumptions explicit for regulation and oversight.
\end{itemize}
\end{boxedtext}

\subsection{Implications for Signal Processing Methodology}

A causal view implies:
\begin{itemize}
    \item \textbf{Representation learning} should emphasize invariance to non-causal variability.
    \item \textbf{Multimodal fusion} should respect causal roles of modalities.
    \item \textbf{Evaluation} should assess stability under simulated or real interventions.
\end{itemize}

Explicit modeling of signal generation and intervention effects provides a principled path toward robust and clinically meaningful AI systems.

%======================================

%===================================

%@@@@@@@@@@@@@@@@@@@@@@@----4----@@@@@@@@@@@@@@@@@@@

\section{Interventions, Distribution Shifts, and Robustness}
\label{sec:robustness}

Clinical decision-support systems operate in evolving environments. Imaging protocols, populations, and screening policies change over time. From a causal perspective, these changes correspond to interventions on the data-generating process.

\subsection{Interventions in Medical Data Generation}

In medical signal processing, interventions affect:
\begin{itemize}
    \item \textbf{acquisition processes} (scanner hardware, imaging parameters),
    \item \textbf{population composition} (demographics, disease prevalence),
    \item \textbf{clinical policies} (screening thresholds, follow-up strategies).
\end{itemize}
Such interventions modify observed signals without altering biological disease mechanisms.

\subsection{Distribution Shift as a Causal Phenomenon}

Distribution shift reflects changes in causal mechanisms rather than purely statistical variation. Shifts arise from changes in confounder distributions, measurement processes, and institutional policies. Correlation-based models may rely on unstable associations, whereas causal models isolate invariant relationships between latent disease states and meaningful signals.

\subsection{Robustness Through Causal Invariance}

Robust models remain invariant to scanner type and acquisition settings while responding appropriately to disease-related changes. 

\paragraph{Invariant Representation Condition.}
Let $Z = \phi(X)$ denote a learned representation.
Causal robustness corresponds to the conditional independence
\begin{equation}
Z \;\perp\!\!\!\perp\; A \mid D,
\end{equation}
and equivalently,
\begin{equation}
Y \;\perp\!\!\!\perp\; A \mid Z.
\end{equation}
These conditions ensure that predictions depend on disease-related information
and remain invariant under interventions on acquisition variables.

Signal processing methods that promote invariance through structured representations or causal regularization improve generalization without exhaustive retraining.

\subsection{Implications for Training and Evaluation}

A causal view of robustness implies:
\begin{itemize}
    \item data augmentation should simulate plausible interventions,
    \item cross-dataset validation approximates interventional testing,
    \item uncertainty estimates should reflect causal ambiguity.
\end{itemize}
Robustness is thus assessed by stability under clinically meaningful change rather than accuracy alone.

\paragraph{Invariant Risk Objective.}
Given multiple environments $e, e' \in \mathcal{E}$ corresponding to different acquisition
conditions or institutions, invariant learning can be expressed as
\begin{align}
\min_{\phi,h} \sum_{e \in \mathcal{E}}
\mathbb{E}_{(X,Y)\sim P_e}
\big[\ell(h(\phi(X)), Y)\big] \\
\quad \text{s.t.} \quad
P_e(Y \mid \phi(X)) = P_{e'}(Y \mid \phi(X)),
\end{align}
for all $e, e' \in \mathcal{E}$.
This constraint enforces predictive relationships that are stable under non-causal interventions.

%@@@@@@@@@@@@@@@------5------@@@@@@@@@@@@@@@@@@@@

\section{Interpretable Risk Modeling Beyond Saliency Maps}
\label{sec:interpretability}

Interpretability is essential for clinical deployment but is often reduced to post hoc visualizations. These methods indicate which inputs influence predictions but do not explain decisions in terms of disease mechanisms and remain correlational.

From a causal perspective, interpretability must be grounded in explanatory variables corresponding to clinically meaningful causes.

\subsection{Limitations of Correlation-Based Interpretability}

Correlation-based explanations:
\begin{itemize}
    \item are sensitive to confounders and acquisition artifacts,
    \item do not support counterfactual reasoning,
    \item may conflict with established clinical knowledge.
\end{itemize}

\subsection{Structured Causal Explanations}

Causal interpretability emphasizes:
\begin{itemize}
    \item explicit representation of known risk factors,
    \item latent disease states mediating between risk and observation,
    \item separation of disease-related and acquisition-related variability.
\end{itemize}
Explanations relate predictions to clinically recognized factors rather than low-level signal patterns.

\subsection{Neuro-Symbolic Reasoning as a Functional Requirement}

Causal abstraction requires mapping signal representations to structured variables governed by clinical knowledge and guidelines. These abstractions cannot be enforced by statistical learning alone. Neuro-symbolic reasoning is therefore required: neural models extract uncertain evidence from signals, while symbolic reasoning operates on causal abstractions, enforces domain constraints, and supports counterfactual queries. This neuro-symbolic causal pipeline is illustrated in Fig.~\ref{fig:causal_dag_pipeline}.

\subsection{Toward Intervention-Aware Clinical AI}

Clinical AI systems inform decisions that themselves constitute interventions. Models lacking causal awareness risk reinforcing biases or producing misleading recommendations under changing conditions. An intervention-aware signal processing framework aligns model behavior with clinical reasoning by explicitly modeling how changes propagate through the system.

\begin{figure}[t]
    \centering
    \includegraphics[width=\linewidth]{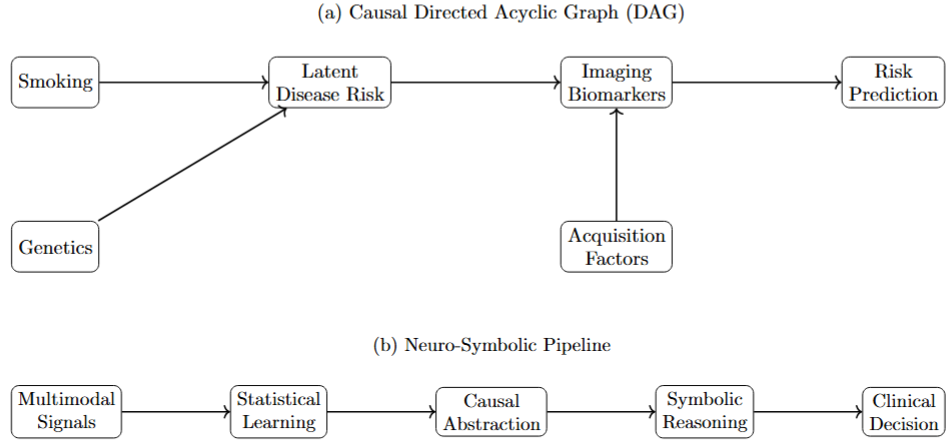}
    \caption{Causal DAG and neuro-symbolic pipeline for clinical risk prediction.
(a) Patient risk factors influence latent disease state, which generates observable biomarkers.
(b) Acquisition factors affect signal appearance without altering disease.
(c) Learned representations map signals to causal abstractions that support symbolic reasoning and intervention-aware decisions.}
\label{fig:causal_dag_pipeline}
\end{figure}

\subsection{Example: Encoding Clinical Guidelines Symbolically}
\label{subsec:guideline-example}

To illustrate how domain knowledge can be formally integrated, consider a simplified rule derived from the \textit{American Cancer Society (ACS) guidelines} for breast cancer screening~\cite{acs2023screening}:

\begin{quote}
    \textit{``Recommend annual mammography for women aged 45--54 with elevated lifetime risk (\(>20\%\)) based on family history and genetic factors.''}
\end{quote}

This guideline can be encoded symbolically using a logic-based representation. Below is a pseudocode example that could be implemented in a neuro-symbolic reasoning layer:

\begin{verbatim}
IF 
    Age $\in [45, 54]$
    AND LifetimeRisk > 0.20 
    AND (FamilyHistory = 
            TRUE OR BRCA_Mutation = TRUE)
THEN
    ScreeningRecommendation = 
            "Annual_Mammography"
    RiskCategory = "Elevated"
ELSE IF
    Age $\geq 55$
    AND LifetimeRisk $\leq 0.15$
THEN
    ScreeningRecommendation =
            "Biennial_Mammography"
    RiskCategory = "Average"
\end{verbatim}

\noindent \textbf{Integration in a Neuro-Symbolic Pipeline.} In practice, this symbolic rule set would interact with neural components as follows:
\begin{enumerate}
    \item \textit{Neural Representation Learning:} A deep learning model estimates \texttt{LifetimeRisk} from multimodal inputs (e.g., mammographic features, patient metadata).
    \item \textit{Symbolic Abstraction:} Continuous risk scores are mapped to categorical risk levels (e.g., Low/Medium/High) via thresholding or probabilistic classification.
    \item \textit{Symbolic Reasoning:} The rule engine evaluates patient-specific facts (age, family history, genetic markers from EHR) against the encoded guidelines to produce a recommendation.
\end{enumerate}

\noindent \textbf{Benefits of Symbolic Encoding.} This approach ensures:
\begin{itemize}
    \item \textit{Clinical Compliance:} Decisions adhere to established, evidence-based guidelines.
    \item \textit{Explainability:} The system can output the fired rule(s) as part of its explanation, e.g., 
        \begin{quote}
            ``Annual mammography recommended due to age 48, elevated lifetime risk (24\%), and positive family history.''
        \end{quote}
    \item \textit{Intervention-Awareness:} If screening guidelines change (e.g., updated age thresholds), only the symbolic layer requires modification---no retraining of neural feature extractors is needed, facilitating rapid policy adaptation.
\end{itemize}

Such examples demonstrate how neuro-symbolic frameworks can bridge statistical learning with clinically meaningful, auditable decision logic.

%@@@@@@@@@@@@@@@@@@----6----@@@@@@@@@@@@@@@@@@@@@@@@

\section{Neuro-Symbolic Reasoning for Causal Abstraction and Interpretable Signal Processing}

\begin{figure}[t]
    \centering
    \includegraphics[width=\linewidth]{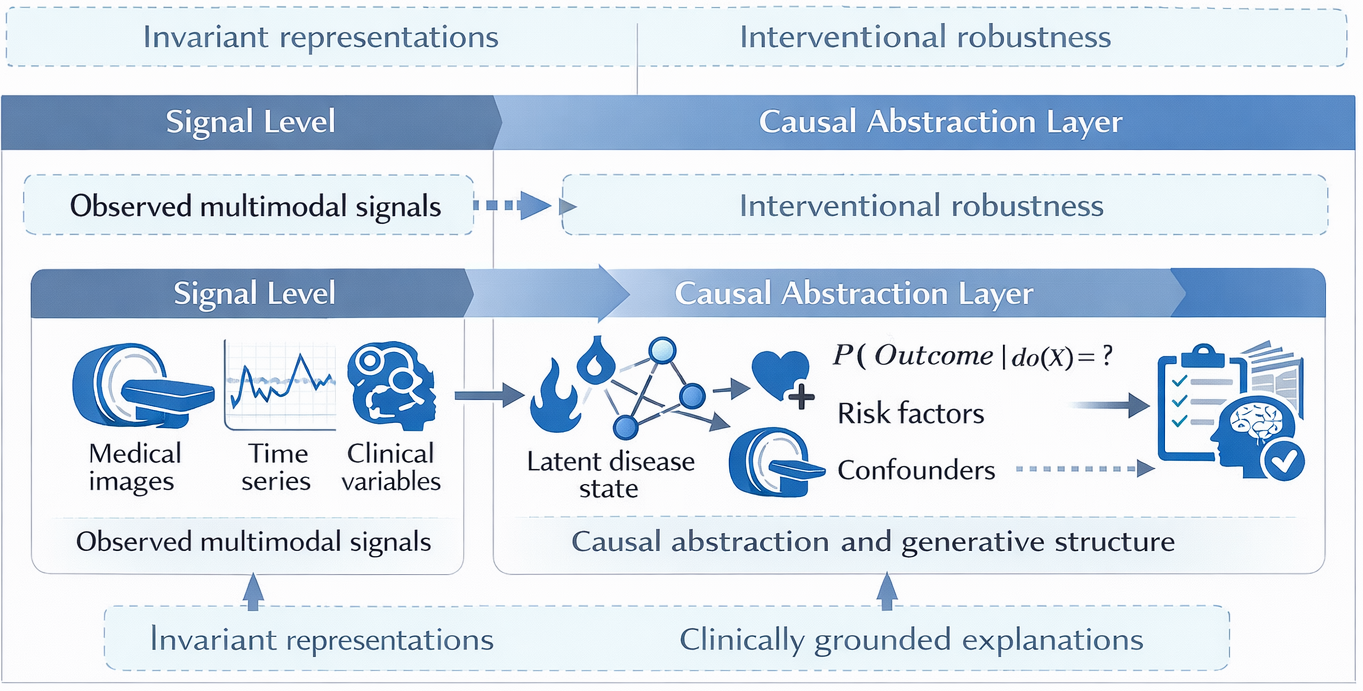}
    \caption{Proposed causal signal processing framework for clinical risk prediction.
Observed multimodal signals are mapped to latent causal abstractions that distinguish 
disease mechanisms from acquisition-related variability. 
This separation enables invariant representation learning, counterfactual reasoning, 
and intervention-aware \cite{Scholkopf2021} decision support across deployment environments.}

    \label{fig:proposed_framework}
\end{figure}

Bridging the gap between high-dimensional signal representations and clinically actionable decisions remains a central challenge in medical artificial intelligence. While statistical learning methods excel at extracting patterns from complex multimodal data, their internal representations are often opaque and difficult to align with clinical reasoning. Conversely, symbolic and rule-based systems provide transparency and explicit structure but lack the flexibility required to operate directly on raw medical signals. Neuro-symbolic systems \cite{Scholkopf2021} offer a principled framework for integrating these complementary paradigms, enabling causal, interpretable, and robust clinical decision support Fig.~\ref{fig:proposed_framework}.

\subsection{Motivation for Neuro-Symbolic Integration}

Purely data-driven models typically learn predictive associations that are sensitive to dataset-specific correlations and acquisition artifacts. In clinical settings, such sensitivity can undermine trust and reliability, particularly under distribution shifts or external interventions. Symbolic reasoning, by contrast, operates on abstract concepts such as risk factors, disease stages, and logical relationships that are familiar to clinicians and consistent with established medical guidelines.

Neuro-symbolic integration is not merely a design choice but a functional requirement for causal interpretability in clinical risk prediction. While causal neural models can learn invariant representations, they cannot by themselves enforce clinical guidelines, logical constraints, or medically meaningful relationships among risk factors. Symbolic reasoning is therefore required to encode domain knowledge, ensure consistency with established medical rules, and support structured counterfactual queries (e.g., how risk would change under altered patient behavior or screening policy). By separating statistical evidence extraction from symbolic causal reasoning, neuro-symbolic systems enable explanations and decisions that remain valid under interventions on acquisition conditions and clinical policies.

\subsection{Causal Role of Symbolic Abstractions}

Within a causal framework, symbolic abstractions serve as intermediate variables that mediate between learned signal representations and clinical decisions. These abstractions may correspond to identifiable risk factors, disease indicators, or clinically meaningful categories derived from data. By operating on such abstractions, symbolic reasoning layers facilitate causal interpretability and support counterfactual analysis.

For example, rather than attributing a prediction to low-level image features, a neuro-symbolic system can explain risk in terms of elevated disease indicators combined with known patient risk factors. Such explanations remain meaningful under interventions affecting acquisition conditions, thereby improving robustness and clinical relevance.

\subsection{From Statistical Learning to Symbolic Reasoning}

Formally, learned representations are mapped to symbolic causal abstractions
\begin{equation}
S = \eta(Z),
\end{equation}
and final decisions are produced via symbolic reasoning
\begin{equation}
\hat{Y} = \arg\max_y P(y \mid S, \mathcal{K}),
\end{equation}
where $\mathcal{K}$ encodes clinical knowledge, guidelines, and logical constraints.
This separation enables interpretable and intervention-aware decision making.

A typical neuro-symbolic pipeline for clinical risk prediction can be conceptualized as follows:

\begin{itemize}
    \item \textbf{Statistical representation learning} extracts features from multimodal signals such as medical images, physiological measurements, clinical variables, or molecular profiles.
    \item \textbf{Causal abstraction} maps these representations to interpretable variables that reflect underlying disease mechanisms or clinically relevant factors.
    \item \textbf{Symbolic reasoning} applies logical constraints, domain knowledge, and clinical guidelines to produce consistent and explainable decisions.
\end{itemize}

This modular structure enables each component to be evaluated and validated independently, supporting transparency, interpretability, and regulatory compliance in clinical applications

%\begin{boxedtext}
\subsubsection*{\textbf{Running Example: Lung Cancer Risk Prediction}}

Consider a clinical system designed to estimate lung cancer risk from low-dose chest CT scans combined with patient metadata. A conventional correlational model learns a direct mapping from image features and clinical variables to a risk score. In practice, such a model may inadvertently exploit correlations between scanner-specific artifacts and disease prevalence if these artifacts co-occur with positive cases in the training dataset.

From a causal perspective, lung cancer risk is driven by latent biological mechanisms influenced by known causal factors such as smoking history, age, and genetic predisposition. These factors generate observable manifestations, including pulmonary nodules and tissue texture patterns in CT images. Acquisition conditions, such as scanner type or reconstruction kernel, affect the appearance of these signals without altering the underlying disease process.

In a neuro-symbolic framework, statistical representation learning extracts imaging biomarkers from CT scans and integrates them with patient variables. These representations are then mapped to causal abstractions, such as estimated disease burden or progression stage. A symbolic reasoning layer combines these abstractions with known risk factors using clinically motivated rules or probabilistic logic to produce a final risk estimate. Explanations are thus expressed in terms of interpretable causal variables (e.g., elevated nodule burden combined with heavy smoking history) rather than pixel-level features.
%\end{boxedtext}

\begin{table}[t]
\centering
\caption{Comparison of correlation-based, causal, and neuro-symbolic approaches to clinical risk prediction.}
\begin{tabular}{p{2.2cm} p{1.5cm} p{1.6cm} p{1.8cm}}
\hline
\textbf{Aspect} & \textbf{Correlation-Based} & \textbf{Causal} & \textbf{Neuro-Symbolic} \\
\hline
Input modeling & Raw signals treated as predictive features. & Signals modeled as effects of latent causes. & Signals mapped to causal abstractions and symbolic variables. \\
Robustness & Sensitive to dataset-specific correlations. & Seeks invariance under non-causal interventions. & Combines invariance with rule-based constraints. \\
Interpretability & Post hoc feature attributions. & Explanations in terms of causal variables. & Explanations in terms of causal variables and symbolic rules. \\
Counterfactuals & Not supported. & Supported via causal models. & Supported via symbolic and causal reasoning. \\
Clinical alignment & Weak. & Moderate. & Strong. \\
Deployment behavior & Brittle under distribution shift. & More stable under intervention. & Stable and auditable. \\
\hline
\end{tabular}
\end{table}

\subsection{Intervention-Aware Decision Making}

An important advantage of neuro-symbolic reasoning is its compatibility with intervention-aware decision making. Symbolic layers can explicitly encode assumptions about which variables are causal and which are confounded, thereby enabling reasoning under hypothetical changes such as modified screening policies, altered imaging protocols, or different patient behaviors. As a result, decisions become less sensitive to spurious correlations and more aligned with causal understanding.

Moreover, symbolic reasoning supports uncertainty-aware outputs by exposing ambiguities in causal structure or data availability. Such transparency is essential in high-stakes clinical environments, where automated decisions must be interpretable, justifiable, and, when necessary, deferred to human expertise.

\subsection{Implications for Signal Processing Research}

From a signal processing perspective, neuro-symbolic reasoning represents a shift from purely end-to-end optimization toward structured and interpretable pipelines. This shift encourages the design of representations that are not only predictive but also causally meaningful. It also opens new research directions in signal transformations, multimodal fusion strategies, and learning objectives that facilitate symbolic abstraction and causal reasoning.

By positioning neuro-symbolic methods as a bridge from signals to decisions, this framework aligns with the broader objective of moving from correlation-based analysis to causal inference within the signal processing community.

\subsection{Computational Considerations in Neuro-Symbolic Pipelines}
\label{subsec:computational}

While neuro-symbolic integration offers significant benefits in terms of interpretability and causal consistency, it introduces unique computational challenges that must be considered for clinical deployment. The hybrid pipeline typically involves three primary stages, each with distinct complexity profiles:

\begin{enumerate}[label=(\roman*)]
    \item \textbf{Representation Learning Complexity:} Deep learning models for feature extraction---such as convolutional neural networks for medical images or transformers for clinical text---remain computationally intensive, particularly in high-dimensional multimodal settings. Training these models requires substantial labeled data and GPU resources, while inference latency must meet clinical real-time constraints in applications such as critical care monitoring or intraoperative imaging.
    
    \item \textbf{Symbolic Abstraction Overhead:} Mapping continuous neural embeddings to discrete symbolic abstractions (e.g., via clustering, rule-based thresholds, or attention mechanisms) adds additional inference steps. This abstraction layer must balance precision with computational efficiency, especially when handling probabilistic or uncertain mappings.
    
    \item \textbf{Symbolic Reasoning Scalability:} Logical inference, constraint satisfaction, and rule-based reasoning may become computationally expensive as the number of clinical guidelines, risk factors, and patient-specific constraints grows. In the worst case, symbolic reasoning can exhibit combinatorial complexity, though clinical rule sets are often intentionally designed to be tractable.
    
    \item \textbf{Integration and Optimization Costs:} Joint training or fine-tuning of neural and symbolic components often requires specialized optimization techniques, such as neuro-symbolic differentiable programming or expectation-maximization schemes. These approaches can be less scalable than standard end-to-end deep learning and may require careful hyperparameter tuning.
\end{enumerate}

\noindent \textbf{Mitigation Strategies.} Future research could explore several directions to improve scalability:
\begin{itemize}
    \item Approximate neuro-symbolic inference methods, such as relaxed logical constraints or probabilistic soft logic, which trade exactness for efficiency.
    \item Modular and lazy evaluation strategies that activate only relevant symbolic rules per clinical case, reducing unnecessary computation.
    \item Hardware-aware design tailored to deployment settings (e.g., edge devices for point-of-care applications versus cloud-based batch processing).
    \item Explicit trade-off analyses between interpretability gains and computational overhead, enabling designers to choose appropriate operating points for given clinical scenarios.
\end{itemize}

These considerations are essential for translating neuro-symbolic causal frameworks from research prototypes into scalable, clinically viable systems.

\section{Illustrative Use Case: Risk Prediction in Medical System}

Computer-Aided Diagnosis (CAD) schemes are algorithms/schemes that provide a second opinion for radiologists in their assessment of medical images \cite{Doi2007}. Their use in different imaging modalities has gained popularity over the past few decades. More recently, CAD schemes have transitioned from detection and diagnosis to disease classification (e.g., cancer) and, subsequently, to risk prediction, forecasting future cancer occurrence in sequential medical imaging data \cite{Mikhael2023}\cite{Yala2021}.

These risk prediction schemes are very important, as they facilitate personalized screening regimens and precision medicine, both of which can be highly beneficial to the public. For example, the Massachusetts Institute of Technology (MIT) released the Mirai breast cancer risk model \cite{Yala2021}, which can predict short- and long-term cancer risk based on a woman's yearly screening mammogram. 

Using these risk models, individualized screening regimens can be prescribed for individual women rather than a blanket/fixed screening regimen, e.g., making it mandatory for all women to undergo yearly or biannual screening mammograms. This reduces anxiety in women due to false positives on their mammograms, and also reduces screening costs and unnecessary radiation exposure to women due to unnecessary mammograms.  

Consider a system designed to estimate lung cancer risk from Computed Tomography (CT) scans and clinical data. A correlational model may learn associations between scanner artifacts and cancer labels if such artifacts are correlated with disease prevalence in the training dataset. A causal model, by contrast, explicitly represents smoking history and genetic predisposition as causes of latent cancer risk, which in turn generates observable nodules and texture patterns.

Under a scanner change, the correlational model may fail, whereas the causal model remains stable because it conditions on disease-related variables rather than acquisition artifacts. Explanations from the causal model can attribute risk to smoking behavior and observed biomarkers rather than to irrelevant imaging patterns.

\subsection{Additional Clinical Exemplars}
\label{subsec:exemplars}

\subsubsection*{ECG-Based Cardiovascular Risk Prediction}
Cardiac arrhythmia detection and cardiovascular risk stratification from electrocardiograms (ECGs) provide a compelling case for causal signal processing. ECGs are influenced by:
\begin{itemize}
    \item \textbf{Causal Sources:} Myocardial ischemia, structural heart disease, genetic channelopathies.
    \item \textbf{Acquisition Confounders:} Electrode placement, skin impedance, device manufacturer, motion artifacts.
    \item \textbf{Temporal Confounders:} Medication effects, electrolyte imbalances, autonomic tone variations.
\end{itemize}
A correlational model might learn to associate specific ECG device artifacts with certain arrhythmias if training data is imbalanced across devices. A causal approach would seek representations invariant to acquisition while preserving disease-related features like ST-segment morphology or QT interval. Neuro-symbolic integration could encode clinical rules (e.g., \textit{``Prolonged QTc $>$ 500ms with syncope suggests high risk of torsades''}) alongside learned embeddings.

\subsubsection*{Diabetic Retinopathy Screening from Fundus Images}
Retinal fundus photography is widely used for diabetic retinopathy (DR) screening. The causal structure includes:
\begin{itemize}
    \item \textbf{Causal Factors:} Diabetes duration, glycemic control, hypertension.
    \item \textbf{Latent Disease State:} Microvascular damage, neovascularization, retinal ischemia.
    \item \textbf{Acquisition Variables:} Camera model, pupil dilation, image resolution, illumination.
\end{itemize}
Correlational deep learning models may exploit dataset-specific correlations (e.g., certain camera brands used in clinics with higher DR prevalence). A causal model would disentangle true DR biomarkers (microaneurysms, hemorrhages) from acquisition artifacts. Symbolic reasoning could integrate guidelines from the International Clinical Diabetic Retinopathy Disease Severity Scale, ensuring that risk predictions align with clinically established progression pathways.

\subsubsection*{Comparative Insights}
These examples share a common causal motif: 
\begin{enumerate}
    \item A latent disease state generates observable signals.
    \item Acquisition processes act as confounding channels.
    \item Clinical guidelines provide symbolic constraints on decision-making.
\end{enumerate}
This universality reinforces the value of the proposed causal signal processing framework across diverse clinical domains.

\section{Implications for Signal Processing Methodology}

A causal signal processing framework has several methodological implications:
\begin{itemize}
    \item \textbf{Representation learning} should emphasize invariance to non-causal variability.
    \item \textbf{Multimodal fusion} should respect causal roles of modalities rather than treating them as interchangeable.
    \item \textbf{Evaluation} should include stress tests under simulated or real interventions.
    \item \textbf{Uncertainty modeling} should reflect ambiguity in causal structure and data availability.
\end{itemize}

\begin{boxedtext}
\textbf{Box 3: Implications for Signal Processing Research}

A causal perspective suggests several directions for future signal processing methodology:

\begin{itemize}
    \item \textbf{Causality-aware representation learning:} Feature extraction should emphasize invariance to non-causal variability induced by acquisition conditions.
    
    \item \textbf{Interventional data modeling:} Distribution shifts can be interpreted as interventions, enabling principled stress-testing of learned models.
    
    \item \textbf{Causal multimodal fusion:} Different signal modalities should be integrated according to their causal roles (cause, mediator, or effect).
    
    \item \textbf{Robust evaluation protocols:} Model validation should assess stability under simulated or real interventions rather than relying solely on i.i.d.\ test data.
\end{itemize}
\end{boxedtext}

\section{Open Challenges and Future Research Directions}
\label{sec:challenges}
While causal and neuro-symbolic approaches offer a promising pathway toward robust and interpretable clinical AI, significant open challenges remain. Addressing these challenges will require advances across signal processing, causal inference, machine learning, and clinical science. In this section, we outline key research directions that are critical for translating causal signal processing frameworks into reliable real-world medical systems.

\subsection{Causal Identifiability in Observational Medical Data}

\paragraph{Identifiability} We introduce a distributional and invariance-based formulation without Pearl-Style SCMs

\subsubsection{Model Assumptions (Non-SCM)}

Let:
\begin{itemize}
    \item $D \in \mathcal{D}$: latent disease-related variable,
    \item $A \in \mathcal{A}$: acquisition-related nuisance variable,
    \item $X \in \mathcal{X}$: observed biomedical signal,
    \item $Y \in \mathcal{Y}$: clinical outcome.
\end{itemize}

Assume the joint distribution factorizes as
\begin{equation}
P(D, A, X, Y) = P(D)\, P(A)\, P(X \mid D, A)\, P(Y \mid D).
\label{eq:factorization}
\end{equation}

This factorization encodes that:
\begin{itemize}
    \item $D$ is the sole common cause of $X$ and $Y$,
    \item $A$ influences observations but not outcomes,
    \item $Y$ depends on $X$ only through $D$.
\end{itemize}

\subsubsection{Environments as Distribution Shifts}

Let $\mathcal{E}$ be a set of environments such that for each $e \in \mathcal{E}$,
\begin{equation}
P_e(D) = P(D), 
\qquad
P_e(Y \mid D) = P(Y \mid D),
\end{equation}
while
\begin{equation}
P_e(X \mid D, A) \neq P_{e'}(X \mid D, A)
\quad \text{for some } e \neq e'.
\label{eq:env_shift}
\end{equation}

Thus, environments differ only in acquisition-related variability.

\subsubsection{Invariant Representation Definition}

Let $Z = \phi(X)$ be a learned representation.
We say that $Z$ is \emph{predictively invariant} if
\begin{equation}
P_e(Y \mid Z) = P_{e'}(Y \mid Z),
\qquad \forall e, e' \in \mathcal{E}.
\label{eq:predictive_invariance}
\end{equation}

\subsubsection{Identifiability via Predictive Invariance}

\begin{theorem}
Under assumptions \eqref{eq:factorization}--\eqref{eq:predictive_invariance}, if
$Z = \phi(X)$ satisfies predictive invariance across environments, then there exists
a measurable function $\psi$ such that
\begin{equation}
Z = \psi(D) \quad \text{a.s.},
\label{eq:identifiability}
\end{equation}
up to an invertible transformation.
\end{theorem}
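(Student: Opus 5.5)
The plan is to reduce predictive invariance to a statement about conditional expectations of the ``disease signal'' $m(D) := P(Y \mid D)$, and then use the variability across environments in \eqref{eq:env_shift} to rule out any dependence of $Z$ on $A$ that is not mediated by $D$.

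\emph{Step 1.} From the factorization \eqref{eq:factorization}, $Y \perp\!\!\!\perp (X,A) \mid D$. Since $Z=\phi(X)$, this gives $Y \perp\!\!\!\perp Z \mid D$, and hence
\begin{equation*}
P_e(Y \mid Z) = \mathbb{E}_e\bigl[\, m(D) \mid Z \,\bigr].
\end{equation*}
Here $m$ is environment-independent because $P_e(Y\mid D)=P(Y\mid D)$.

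\emph{Step 2.} Predictive invariance \eqref{eq:predictive_invariance} then says that the regression $z \mapsto \mathbb{E}_e[m(D)\mid Z=z]$ is the same function $g(z)$ for every $e$. Writing $P_e(D \mid Z=z) \propto P(D)\, P_e(Z=z \mid D)$, where $P_e(Z\mid D)$ is the pushforward of $P_e(X\mid D,A)P(A)$ under $\phi$, the invariance becomes a family of integral identities in the unknown channel laws.

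\emph{Step 3.} The aim is to show that $Z$ carries no information beyond a function of $D$, i.e.\ $Z \perp\!\!\!\perp A \mid D$ with $P_e(Z\mid D)$ degenerate, so that $Z=\psi(D)$ a.s. The intended route is a contrapositive argument. If $Z$ retained an $A$-dependent component, then, since the acquisition channel varies across environments, the posterior $P_e(D\mid Z)$ would shift with $e$. A completeness-type condition on the environment family would then force $\mathbb{E}_e[m(D)\mid Z]$ to change as well.

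\emph{Main obstacle.} Step 3 is where I expect the real difficulty, and the statement cannot hold with only the listed assumptions. Two counterexamples show this:
\begin{itemize}
\item If $\phi$ extracts only acquisition noise (say $Z$ is a function of $A$ alone), then $P_e(Y\mid Z)=P(Y)$ is invariant, yet $Z$ is not a function of $D$.
\item Any $Z$ that is a deterministic function of $D$ but appended with independent noise is likewise invariant without being a function of $D$.
\end{itemize}

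I would therefore make the following additional hypotheses explicit before carrying out Step 3:
\begin{enumerate}
\item[(a)] \textbf{Sufficiency.} $Z$ is sufficient for $Y$ within each environment, $Y \perp\!\!\!\perp X \mid Z$, or equivalently $Z$ is chosen as the minimal sufficient invariant statistic.
\item[(b)] \textbf{Environment diversity.} The family $\{P_e(X\mid D,A)\}_{e\in\mathcal E}$ is rich enough to be complete. Concretely, if $\int h(z)\,dP_e(z\mid d)$ does not depend on $e$ for all $d$, then $h$ factors through $d$.
\item[(c)] \textbf{Injectivity of $m$.} $m$ is injective on $\mathcal D$, or else $D$ is replaced by its $Y$-relevant coarsening $m(D)$.
\end{enumerate}

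Under (a)--(c), Step 3 proceeds in four moves:
\begin{itemize}
\item Sufficiency combined with invariance gives $g(Z)=m(D)$ a.s. in every environment.
\item Injectivity then yields $\sigma(m(D))\subseteq\sigma(Z)$.
\item Minimality together with diversity gives the reverse inclusion $\sigma(Z)\subseteq\sigma(D)$.
\item The Doob--Dynkin lemma then produces a measurable $\psi$ with $Z=\psi(D)$ a.s. This holds up to an invertible reparametrization, since any bijection of $Z$ preserves both invariance and sufficiency.
\end{itemize}
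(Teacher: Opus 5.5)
Your central claim is correct, and it is the real issue here: the statement is false under \eqref{eq:factorization}--\eqref{eq:predictive_invariance} alone. Your examples go through once $X$ carries the relevant coordinates. For instance, take $X=(D,A,N_e)$, with the law of $N_e$ depending on $e$ so that \eqref{eq:env_shift} holds. Then $Z=A$ gives $P_e(Y\mid Z)=P(Y)$, and $Z=(D,N_e)$ gives $P_e(Y\mid Z)=m(D)$. Both are invariant, and $\sigma(Z)\not\subseteq\sigma(D)$ in either case, so no invertible reparametrization helps.

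The paper's proof is exactly the contrapositive you sketched in Step~3, and it breaks where you expected:
\begin{itemize}
\item The step ``$Z$ not a function of $D$ $\Rightarrow$ $P_e(Z\mid D)\neq P_{e'}(Z\mid D)$'' fails for $Z=(D,N)$ when the law of $N$ is fixed across environments and the shift lives in another coordinate of $X$.
\item The Bayes step ``likelihoods differ $\Rightarrow$ posteriors differ'' fails whenever the likelihood ratio is environment-free; the example below is one such case.
\item The closing inference from invariance of $P_e(Z\mid D)$ to $Z=\psi(D)$ is refuted again by $Z=(D,N)$.
\end{itemize}
There is no argument to rescue, only a statement to correct.

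The genuine gap is in your repair, at Move~1. Sufficiency $Y\perp\!\!\!\perp X\mid Z$ plus invariance gives $g(Z)=P_e(Y\mid X)=\mathbb{E}_e[m(D)\mid X]$, not $g(Z)=m(D)$. The latter needs $m(D)$ to be $\sigma(X)$-measurable, i.e.\ a noiseless channel, and none of (a)--(c) supplies that. Concretely, let $D\in\{0,1\}$ with $m(0)\neq m(1)$, let $A$ be trivial, and let $X\mid D\sim\mathcal N\bigl((2D-1)s_e,\,s_e\bigr)$ with $s_e$ ranging over an open interval.
\begin{itemize}
\item The likelihood ratio is $e^{2X}$ in every environment, so $P_e(D\mid X)$ is environment-free and strictly increasing in $X$. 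Hence $Z=X$ is minimal sufficient and predictively invariant.
\item Writing $\mathbb{E}_e[h(Z)\mid D=d]$ as a Laplace transform in $1/(2s_e)$ shows that a bounded $h$ for which both conditional expectations are free of $e$ must be a.e.\ constant. So (b) holds as well.
\item Yet $Z$ is not a function of $D$.
\end{itemize}
What Move~1 actually requires is $Y\perp\!\!\!\perp D\mid Z$. Combined with $Y\perp\!\!\!\perp Z\mid D$, this gives $m(D)=P(Y\mid Z,D)=g(Z)$ a.s.

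Second, the ``or equivalently'' in (a) is false, and it matters. With sufficiency alone, take $X=Z=(D,A+N_e)$ with $D,m$ as above, $A\sim\mathcal N(0,1)$, and $N_e\sim\mathcal N(\mu_e,1)$ with $\mu_e$ ranging over an interval. This satisfies (a), (c), predictive invariance, and (b), the last by completeness of the Gaussian location family. Yet $Z$ is not a function of $D$. So Move~3 rests entirely on minimality, and (b) does no work there. Condition (b) needs an environment-free $\mathbb{E}_e[h(Z)\mid D]$ as input, while predictive invariance only constrains the reverse regression $\mathbb{E}_e[m(D)\mid Z]$; this is the same missing link as in the paper.

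Minimality does give $\sigma(Z)\subseteq\sigma(D)$, since once $D$ is a function of $Z$ it is itself a sufficient invariant statistic. But then, with $Y\perp\!\!\!\perp D\mid Z$ and injectivity of $m$, Doob--Dynkin yields $Z=\psi(D)$ with $\psi$ invertible, without invoking invariance or (b) at all. The corrected theorem is therefore close to a tautology, and your main conclusion stands: as stated, the result does not hold.
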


\begin{proof}
From the assumed factorization, we have the conditional independence
\begin{equation}
Y \perp\!\!\!\perp X \mid D.
\label{eq:ci}
\end{equation}

For any environment $e$,
\begin{equation}
P_e(Y \mid Z)
= \int P(Y \mid D)\, P_e(D \mid Z)\, dD.
\label{eq:mixture}
\end{equation}

By Bayes' rule,
\begin{equation}
P_e(D \mid Z) \propto P_e(Z \mid D)\, P(D).
\label{eq:bayes}
\end{equation}

Assume, toward contradiction, that $Z$ is not a function of $D$ alone. Then there exist
environments $e, e'$ such that
\begin{equation}
P_e(Z \mid D) \neq P_{e'}(Z \mid D).
\label{eq:contradiction}
\end{equation}

Since $P(D)$ is invariant across environments, \eqref{eq:bayes} implies
\begin{equation}
P_e(D \mid Z) \neq P_{e'}(D \mid Z),
\end{equation}
which, substituted into \eqref{eq:mixture}, yields
\begin{equation}
P_e(Y \mid Z) \neq P_{e'}(Y \mid Z),
\end{equation}
contradicting predictive invariance \eqref{eq:predictive_invariance}.

Hence,
\begin{equation}
P_e(Z \mid D) = P_{e'}(Z \mid D),
\qquad \forall e, e' \in \mathcal{E},
\end{equation}
which implies that $Z$ cannot depend on $A$ except through $D$. Therefore,
\eqref{eq:identifiability} holds up to an invertible transformation.
\end{proof}

\subsubsection{Corollaries}

\begin{corollary}[Robustness Without Causal Graphs]
Any predictor $h(Z)$ trained on a predictively invariant representation satisfies
\begin{equation}
P_e(Y \mid Z) = P(Y \mid Z),
\qquad \forall e \in \mathcal{E},
\end{equation}
and is therefore robust to acquisition-induced distribution shifts.
\end{corollary}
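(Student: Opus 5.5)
The plan is to derive the corollary directly from the definition of predictive invariance \eqref{eq:predictive_invariance}. I will not route it through the identifiability theorem. The conclusion only concerns the conditional law of $Y$ given $Z$, which the invariance assumption already controls, so the stronger statement $Z=\psi(D)$ is not needed. The one real point is to say what the environment-free object $P(Y\mid Z)$ denotes.

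\textbf{Step 1: define $P(Y\mid Z)$.} I would take $P$ to be the pooled distribution over environments, $P=\sum_{e\in\mathcal{E}} w_e P_e$ with weights $w_e>0$ summing to one; for infinite $\mathcal{E}$ the sum becomes an integral against a mixing measure. On the set of $z$ where the pooled marginal of $Z$ has positive density, Bayes' rule gives
\begin{equation*}
P(Y\mid Z=z)=\frac{\sum_{e} w_e\, P_e(Z=z)\, P_e(Y\mid Z=z)}{\sum_{e} w_e\, P_e(Z=z)} .
\end{equation*}
By \eqref{eq:predictive_invariance}, every $P_e(Y\mid Z=z)$ with $P_e(Z=z)>0$ equals a common kernel $q(Y\mid z)$. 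The weighted average therefore collapses to $q(Y\mid z)$, which yields $P_e(Y\mid Z)=P(Y\mid Z)$ for every $e$, $P_e$-almost surely in $Z$.

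\textbf{Step 2: predictors.} For any measurable $h$ and loss $\ell$, write the environment risk as
\begin{equation*}
R_e(h)=\int \mathbb{E}_{q(\cdot\mid z)}\bigl[\ell(h(z),Y)\bigr]\, P_e(dz).
\end{equation*}
Here the integrand is environment-independent. It follows that a pointwise minimizer $h^*(z)\in\arg\min_{\hat y}\mathbb{E}_{q(\cdot\mid z)}[\ell(\hat y,Y)]$ is simultaneously Bayes-optimal in every $e\in\mathcal{E}$. It also follows that a predictor fitted on pooled training data targets the same conditional that governs each deployment environment. This is the precise sense in which $h(Z)$ is robust to acquisition-induced shifts: such shifts, by \eqref{eq:env_shift}, can change $P_e(Z)$, but they cannot change the conditional on which $h$ is evaluated.

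\textbf{Main obstacle: support.} The step needing most care is the support caveat in Step 1. Invariance only constrains $P_e(Y\mid Z=z)$ where $P_e(Z=z)>0$. If some environment places mass on values of $z$ never seen in training, both the identity and the robustness claim hold only on the overlap. I would handle this by stating the equality almost surely with respect to each $P_e$, and by adding the assumption that the test environments' $Z$-marginals are absolutely continuous with respect to the pooled training marginal. No further ingredients from the SCM sections are needed.
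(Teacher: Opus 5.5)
Your proof is correct, and it takes a different route from the one the paper implies. The paper gives no argument for this corollary. It simply places it under the identifiability theorem, so the implied route is that $Z=\psi(D)$, together with $P_e(D)=P(D)$ and $P_e(Y\mid D)=P(Y\mid D)$, makes the joint law of $(Z,Y)$ the same in every environment. You bypass the theorem and work from the definition of predictive invariance alone, which is all the conclusion needs. This gains soundness, not just economy, because the theorem is false as stated. Take $X=(D+N,\,A+M_e)$, where $N$ is non-degenerate noise with a fixed law and $M_e$ is noise whose law changes with $e$. Then $Z=D+N$ is predictively invariant and has environment-free $P_e(Z\mid D)$, yet it is not a function of $D$. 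So two steps of the theorem's proof fail: the step from ``$Z$ not a function of $D$'' to $P_e(Z\mid D)\neq P_{e'}(Z\mid D)$, and the closing step from invariant $P_e(Z\mid D)$ to $Z=\psi(D)$. Your Step 1 also resolves an ambiguity the paper leaves open. Its environment-free $P$ is the reference law of the factorization, which is never said to belong to $\mathcal{E}$; if it does not, invariance across $\mathcal{E}$ says nothing about $P(Y\mid Z)$. Taking $P$ to be a mixture of the $P_e$ (or any single member of $\mathcal{E}$, where the claim is immediate) is what makes the identity true. Your Step 2 then gives ``therefore robust'' a concrete meaning, namely simultaneous Bayes optimality of a pointwise minimizer, which the paper only asserts. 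Two small refinements. First, if $P$ pools all of $\mathcal{E}$ with strictly positive weights, every $P_e$ is dominated by $P$. The identity then needs no overlap condition, and your absolute-continuity assumption matters only for a predictor fitted on a proper subset of environments. Second, for a diffuse mixing measure over an uncountable $\mathcal{E}$, that domination can fail for individual $e$. There the clean statement is that the common kernel $q$ is simultaneously a version of $P(Y\mid Z)$ and of each $P_e(Y\mid Z)$.
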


\begin{corollary}[Implicit Counterfactual Interpretation]
For any intervention that modifies the data distribution while preserving
\eqref{eq:factorization},
\begin{equation}
\Delta A:
\quad
P(X \mid D, A) \mapsto P'(X \mid D, A),
\end{equation}
predictions remain unchanged, whereas modifications of $P(D)$ induce changes in
$P(Y)$.
\end{corollary}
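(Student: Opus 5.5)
We read the corollary as two separate claims and prove them in turn, both resting on the factorization \eqref{eq:factorization}, the environment model \eqref{eq:env_shift}, and the identifiability theorem together with the robustness corollary above.

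\textbf{Invariance of predictions under $\Delta A$.} First we formalize $\Delta A$ as a new environment $e^\ast$. In $e^\ast$ the factors $P(D)$ and $P(Y\mid D)$ are kept and only $P(X\mid D,A)$ is replaced by $P'(X\mid D,A)$. This is exactly the type of shift allowed by \eqref{eq:env_shift}.

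The key step is to justify that $e^\ast$ belongs to, or can be adjoined to, $\mathcal{E}$ without breaking \eqref{eq:predictive_invariance}. Predictive invariance is only assumed over $\mathcal{E}$, so we must either take $\mathcal{E}$ to contain all acquisition-only shifts, or argue via the theorem. By the theorem, $Z=\psi(D)$ a.s. Then in any environment preserving \eqref{eq:factorization} we can condition on $D$:
\begin{equation*}
P_{e^\ast}(Y\mid Z)=\int P(Y\mid D)\,P_{e^\ast}(D\mid Z)\,dD .
\end{equation*}
Here $P_{e^\ast}(D\mid Z)$ is determined by $P(D)$ and the map $\psi$, neither of which involves $P'(X\mid D,A)$. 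Hence $P_{e^\ast}(Y\mid Z)=P(Y\mid Z)$, and the robustness corollary applies to $e^\ast$. The marginal law of $Z=\psi(D)$ is likewise unchanged. Therefore the predictor $h(Z)$ has the same conditional and marginal behavior, i.e.\ predictions are unchanged.

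One point needs care. $Z=\phi(X)$ is a fixed function of $X$, while the identity $Z=\psi(D)$ a.s.\ was established under the old channel. We will assume, as the theorem implicitly does, that this identity persists under the new channel. Otherwise the argument is circular, and we will flag this explicitly.

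\textbf{Sensitivity of $P(Y)$ to changes in $P(D)$.} Second, we handle a modification $P(D)\mapsto P'(D)$ with $P(Y\mid D)$ fixed. We write
\begin{equation*}
P'(Y)-P(Y)=\int P(Y\mid d)\,\bigl(P'(d)-P(d)\bigr)\,dd ,
\end{equation*}
which is the Markov kernel $K:\mu\mapsto\int P(\cdot\mid d)\,\mu(dd)$ applied to the signed measure $P'-P$.

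This is the main obstacle: $K$ need not be injective. For example, if $Y\perp\!\!\!\perp D$, then $P(Y)$ does not change at all. The plan is therefore to add a completeness (injectivity) condition on $P(Y\mid D)$, i.e.\ that $\int P(Y\mid d)\,\mu(dd)=0$ forces $\mu=0$; in the discrete case, that the columns $P(Y\mid d)$ are linearly independent. Under this condition, any genuine change in $P(D)$ yields a change in $P(Y)$.

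If one prefers not to add this condition, the claim should be weakened to: modifications of $P(D)$ \emph{may} induce changes in $P(Y)$. These changes occur precisely when $P'-P\notin\ker K$. We will state whichever version matches the assumptions actually available.
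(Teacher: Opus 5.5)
The paper states this corollary without proof. It is meant to follow from the theorem, the robustness corollary, and $P(Y)=\int P(Y\mid D)P(D)\,dD$, which is exactly your route. The two hypotheses you add are not artifacts of your approach; they are genuine corrections to the statement. Your case $Y\perp\!\!\!\perp D$ refutes the second clause as printed. Your discrete condition is also sharp: the columns $P(Y\mid d)$ are probability vectors, so any linear dependence among them has coefficients summing to zero, and linear independence is exactly injectivity on differences of distributions. The first clause fails for channels outside $\mathcal{E}$. Suppose every $e\in\mathcal{E}$ uses $X=(D,c_e)$ with $c_e\in\{0,1\}$, and $\phi(x_1,x_2)=x_1\mathbf{1}[x_2\le 1]$. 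Then $Z=D$ on all of $\mathcal{E}$. Yet the new channel $X=(D,2)$ gives $Z\equiv 0$, and $P_{e^\ast}(Y\mid Z=0)=P(Y)\neq P(Y\mid D=0)=P_e(Y\mid Z=0)$ as soon as $P(D=1)>0$ and $P(Y\mid D=1)\neq P(Y\mid D=0)$. So your flagged persistence issue is a real obstruction, not just caution.

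What I would change is the route for the first clause: do not pass through the theorem. Its proof goes from $P_e(D\mid Z)\neq P_{e'}(D\mid Z)$ to $P_e(Y\mid Z)\neq P_{e'}(Y\mid Z)$. That step silently needs the same injectivity you impose in your second part; without it, your degenerate case makes every $Z$, even $Z=X$, predictively invariant. Even with injectivity, the theorem's conclusion fails. Take $A$ non-degenerate, $X=(D,A,N_e)$ with noise $N_e$ whose law depends on $e$, and $Z=(X_1,X_2)=(D,A)$. By \eqref{eq:factorization}, $P_e(Y\mid Z=(d,a))=P(Y\mid D=d)$ for every $e$, yet $Z$ is not a function of $D$.

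So take the first option you mention and assume $e^\ast\in\mathcal{E}$, i.e.\ $\mathcal{E}$ is closed under acquisition-only shifts. Then $P_{e^\ast}(Y\mid Z)=P(Y\mid Z)$ is literally \eqref{eq:predictive_invariance}. The prediction map $z\mapsto h(z)$ that was optimal before therefore remains optimal, with no appeal to $Z=\psi(D)$. If you also want the per-patient reading, that $h(\phi(X))$ itself is unchanged, keep your persistence condition. State it directly as a hypothesis on $\phi$ under the new channel rather than as something inherited from the theorem.
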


%---------------------------------------

\iffalse

\begin{proposition}[Identifiability under Non-Causal Interventions]
Consider the structural causal model defined in (1)--(3), where acquisition variables
$A$ affect the observations $X$ but not the latent disease state $D$ nor the outcome $Y$.
Let $Z = \phi(X)$ be a learned representation satisfying the invariance condition
\begin{equation}
P_e(Y \mid Z) = P_{e'}(Y \mid Z), \quad \forall e,e' \in \mathcal{E},
\end{equation}
for environments $e$ corresponding to interventions on $A$.
Then $Z$ is a function of $D$ alone, i.e.,
\begin{equation}
Z = \psi(D)
\end{equation}
almost surely, up to an invertible transformation.
\end{proposition}

\begin{proof}[Proof Sketch]
Under the assumed SCM, interventions on $A$ modify the conditional distribution
$P(X \mid D)$ while leaving $P(Y \mid D)$ invariant. Suppose that $Z = \phi(X)$
retains information about $A$ beyond what is mediated by $D$. Then there exist
environments $e$ and $e'$ such that
\[
P_e(Z \mid D) \neq P_{e'}(Z \mid D),
\]
which induces environment-dependent conditionals $P_e(Y \mid Z)$ through the
dependence of $Z$ on $A$. This contradicts the assumed invariance of $P(Y \mid Z)$
across environments.

Therefore, $Z$ must be conditionally independent of $A$ given $D$, implying that
$Z$ is a function of $D$ alone, up to an invertible transformation. Such equivalence
is sufficient for robust prediction and causal interpretability under non-causal
interventions.
\end{proof}
\fi

%---------------------------------------

A fundamental challenge in medical AI is that most available data are observational rather than experimental. As a result, causal relationships among variables are often not identifiable without strong assumptions. In clinical settings, randomized controlled trials are costly, time-consuming, and frequently infeasible.

Future research must focus on developing signal processing and representation learning methods that facilitate causal identifiability under realistic assumptions. This includes leveraging domain knowledge, exploiting natural experiments, and designing representations that separate disease-related variability from confounding effects. Identifiability-aware modeling is essential to ensure that learned causal relationships reflect true disease mechanisms rather than dataset-specific artifacts.

\paragraph{Identifiability Under Acquisition Interventions.}
Consider interventions on acquisition variables $do(A=a)$ that alter signal appearance
without affecting disease mechanisms.
Under the SCM in (1)–(3), such interventions change $P(X \mid D)$ while leaving
$P(Y \mid D)$ invariant.
Any representation $Z=\phi(X)$ that retains acquisition-specific information will induce
environment-dependent conditionals $P(Y \mid Z, e)$.
Enforcing invariance across environments therefore eliminates components of $X$
that vary with $A$ but not with $D$, yielding
\begin{equation}
Z = \psi(D),
\end{equation}
up to an invertible transformation.
Thus, disease-related representations are identifiable up to equivalence under
non-causal interventions, which is sufficient for robust prediction and interpretation.

\subsection{Validation of Causal Models Without Interventions}

Evaluating causal models remains challenging when ground-truth interventions are unavailable. Standard performance metrics, such as accuracy or area under the curve, do not capture robustness under intervention or causal correctness.

New evaluation protocols are needed that stress-test models under simulated or proxy interventions, such as cross-institutional validation, acquisition perturbations, or policy-driven data shifts. Signal processing research can contribute by developing principled methods for generating and analyzing such perturbations while preserving clinical realism.

\subsection{Multimodal Fusion Under Causal Constraints}

Medical AI increasingly relies on heterogeneous data sources, including imaging, clinical records, genomics, and longitudinal measurements. While multimodal fusion has shown promise, most existing methods treat modalities as interchangeable inputs rather than as variables with distinct causal roles.

A key research direction is the development of fusion strategies that respect causal structure, distinguishing between causes, mediators, and confounders across modalities. Such approaches can improve interpretability, reduce overfitting, and enable more reliable decision-making under missing or corrupted data.

\subsection{Uncertainty Quantification and Causal Ambiguity}

Uncertainty is inherent in medical decision-making, particularly when causal relationships are only partially identifiable. Models that provide point predictions without uncertainty estimates may give a false sense of confidence and lead to inappropriate clinical actions.

Future work should integrate uncertainty quantification with causal modeling, explicitly representing ambiguity arising from limited data, confounding, or model assumptions. Neuro-symbolic frameworks are particularly well suited to this task, as they can expose uncertainty at both the statistical and symbolic levels.

\subsection{Human--AI Collaboration and Clinical Integration}

Ultimately, clinical AI systems are designed to support, not replace, human decision-makers. Effective integration requires that model outputs and explanations align with clinical workflows, guidelines, and reasoning processes.

Research is needed to understand how causal explanations influence clinician trust, decision-making, and accountability. Signal processing researchers can contribute by designing systems that communicate causal assumptions transparently and support interactive exploration of alternative scenarios and interventions.

\subsection{Regulatory and Ethical Considerations}

Causal interpretability has important implications for regulatory approval and ethical deployment of medical AI. Transparent reasoning about why decisions are made, and under what assumptions they remain valid, is essential for meeting emerging regulatory standards.

Future work should explore how causal signal processing frameworks can support auditability, bias detection, and fairness analysis. By making causal assumptions explicit, such frameworks offer a pathway toward more responsible and trustworthy AI systems.

\section{Conclusion}
\label{sec:conclusion}

This work advocates a fundamental shift in medical artificial intelligence from correlation-driven prediction toward causal, intervention-aware signal processing frameworks. By treating multimodal biomedical signals as observations generated by latent causal mechanisms, rather than as isolated inputs for pattern recognition, such frameworks provide a principled foundation for clinical inference and decision support.

Integrating symbolic reasoning with statistical learning further enables models that are not only accurate but also interpretable, uncertainty-aware, and aligned with clinical reasoning processes. From this perspective, neuro-symbolic and causal abstractions support counterfactual analysis, improve robustness under distribution shifts, and facilitate generalization across patient populations and clinical settings.

Taken together, causal signal processing offers a unifying framework for addressing interpretability, robustness, and trust in medical AI. We believe this paradigm represents a critical step toward the development of reliable and transparent clinical decision-support systems and constitutes a timely and fertile research direction for the signal processing community.

\appendix
\section*{Appendix A: Resources for Practitioners}
%\section{Resources for Practitioners}
\label{sec:appendix_resources}

This appendix provides a curated list of software libraries, datasets, and foundational papers to facilitate the implementation of causal and neuro-symbolic signal processing pipelines for clinical risk prediction.

\subsection{Software Libraries}
\label{sec:appendix_software}

\paragraph{Causal Inference}
\begin{itemize}
  \item \textbf{DoWhy (Python)}: End‑to‑end causal inference library supporting structural causal models, effect identification, estimation, and robustness checks within a unified four‑step workflow (model–identify–estimate–refute)~\cite{doWhy}.

  \item \textbf{CausalNex (Python)}: Library for Bayesian network structure learning and inference, aimed at causal discovery and reasoning in real‑world decision pipelines~\cite{causalNex}.

  \item \textbf{gCastle (Python)}: Toolkit for causal structure learning from observational data, supporting multiple state‑of‑the‑art graph discovery algorithms~\cite{gCastle}.
\end{itemize}

\paragraph{Neuro-Symbolic AI}
\begin{itemize}
  \item \textbf{DeepProbLog (Python)}: Neural probabilistic logic programming framework that integrates deep neural networks as predicates into ProbLog, enabling end‑to‑end training over symbolic and subsymbolic components~\cite{deepProbLog}.

  \item \textbf{Neural Theorem Prover (PyTorch)}: Differentiable reasoning architecture that learns to prove logical statements using gradient‑based optimization over symbolic knowledge~\cite{neuralTheoremProver}.

  \item \textbf{Logic Tensor Networks (TensorFlow)}: Framework that combines first‑order logic with neural networks by relaxing logical constraints into differentiable loss terms for learning with soft logical rules~\cite{ltensorNetworks}.
\end{itemize}

\paragraph{Invariant Representation Learning}
\begin{itemize}
  \item \textbf{InvariantRiskMinimization (PyTorch)}: Reference implementation of Invariant Risk Minimization (IRM) for learning predictors whose optimality is stable across environments, used for domain generalization benchmarks~\cite{irm}.

  \item \textbf{DomainBed (Python)}: Benchmark suite for domain generalization that implements a wide range of algorithms (including IRM) with standardized evaluation across multiple datasets and shifts~\cite{domainBed}.
\end{itemize}

\subsection{Public Medical Datasets}
\label{sec:appendix_datasets}

\paragraph{Lung Cancer Imaging}
\begin{itemize}
  \item \textbf{NLST (National Lung Screening Trial)}: Large‑scale low‑dose chest CT screening study with longitudinal follow‑up and mortality outcomes for lung cancer risk modeling~\cite{NLST}.

  \item \textbf{LIDC‑IDRI}: Public lung CT database with expert‑annotated nodules (location, contours, and subjective ratings), widely used for nodule detection and malignancy prediction~\cite{LIDC}.
\end{itemize}

\paragraph{Cardiovascular and Multimodal Clinical Signals}
\begin{itemize}
  \item \textbf{PhysioNet CinC Challenges}: Annual Computing in Cardiology Challenge datasets (e.g., ECG for arrhythmia classification, mortality prediction, and physiological waveform analysis), hosted via PhysioNet~\cite{physionet}.

  \item \textbf{MIMIC‑III}: De‑identified ICU database linking time series (vital signs, waveforms), laboratory results, interventions, and clinical notes for outcome and risk prediction tasks~\cite{mimicIII}.

  \item \textbf{EyePACS}: Fundus photograph dataset for diabetic retinopathy grading, often used as a benchmark for retinal image‑based risk stratification~\cite{eyePACS}.

  \item \textbf{UK Biobank (retinal subset)}: Large population cohort with retinal imaging (OCT and fundus) and rich systemic phenotypes, enabling studies that connect ocular biomarkers with cardiovascular and systemic risk~\cite{ukb}.
\end{itemize}

\subsection{Foundational Readings}
\label{sec:appendix_readings}

\paragraph{Causal Inference}
\begin{itemize}
  \item J.~Pearl, \emph{Causality: Models, Reasoning, and Inference}. Cambridge University Press, 2009~\cite{Pearl2009}.

  \item J.~Peters, D.~Janzing, and B.~Schölkopf, \emph{Elements of Causal Inference: Foundations and Learning Algorithms}. MIT Press, 2017~\cite{peters}.
\end{itemize}

\paragraph{Causal Representation Learning}
\begin{itemize}
  \item B.~Schölkopf et al., ``Towards causal representation learning,'' \emph{Proceedings of the IEEE}, vol.~109, no.~5, pp.~655--675, 2021~\cite{scholkopf21}.

  \item Y.~Bengio et al., ``A meta‑transfer objective for learning to disentangle causal mechanisms,'' in \emph{International Conference on Learning Representations (ICLR)}, 2019~\cite{bengio19}.
\end{itemize}

\paragraph{Neuro-Symbolic AI}
\begin{itemize}
  \item M.~K.~Sarker et al., ``Neuro‑symbolic artificial intelligence: current trends,'' in \emph{AI Communications} / \emph{Artificial Intelligence Review}, 2021~\cite{sarker21}.

  \item R.~Riegel et al., ``Logical neural networks,'' in \emph{Proceedings of the AAAI Conference on Artificial Intelligence}, 2020~\cite{riegel20}.
\end{itemize}

\section*{Appendix B: Glossary of Causal Terminology}
\label{app:glossary}

\begin{itemize}
    \item{Structural Causal Model (SCM)} A tuple $(U, V, F)$ where $U$ are exogenous variables, $V$ are endogenous variables, and $F$ is a set of structural equations determining each $V_i$ from its parents and $U$.
    
    \item{Intervention ($do$-operator)} Notation $do(X=x)$ represents an external action that sets variable $X$ to value $x$, breaking its usual causal dependencies. For example, $do(A=\text{"new scanner"})$ models changing acquisition hardware.
    
    \item{Counterfactual} A statement about what would have happened under different conditions, e.g., ``What would the risk score be if the patient had never smoked?'' Formally, $Y_{X=x}(u)$ denotes the value of $Y$ had $X$ been $x$ for unit $u$.
    
    \item{Confounder} A variable that influences both the treatment (or exposure) and the outcome, creating a spurious association. In medical imaging, scanner type may confound the relationship between image features and disease.
    
    \item{Mediator} A variable that lies on the causal path between treatment and outcome. For example, disease biomarkers mediate the effect of risk factors on clinical outcomes.
    
    \item{Causal Graph} A directed acyclic graph (DAG) where nodes represent variables and edges represent direct causal relationships.
    
    \item{Invariance} Property of a relationship that remains stable under interventions. A representation $Z$ is invariant to acquisition changes if $P(Y|Z)$ remains constant across scanners.
    
    \item{Identifiability} Whether causal quantities (e.g., treatment effects) can be uniquely estimated from observational data given assumptions.
    
    \item{Backdoor Criterion} A set of variables that, when conditioned on, blocks all backdoor paths between treatment and outcome, allowing causal effect estimation.
    
    \item{Frontdoor Criterion} An alternative identification strategy using mediators when confounders are unobserved.
\end{itemize}

\section*{Authors}
\textbf{Surajit Das} is a PhD researcher in Artificial Intelligence and Mathematics at ITMO University with industry experience in software engineering, data analytics, and AI research. He holds an MSc in Artificial Intelligence (distinction) from Liverpool John Moores University, UK, and an MSc in Applied Mathematics from the Moscow Institute of Physics and Technology, Russia. His research focuses on uncertainty quantification, sequential decision-making, and statistical modeling for medical signal processing. He has authored peer-reviewed publications in Q1/Q2 journals and presented at Scopus-indexed international conferences (e.g., IEEE DESE). He serves as a verified peer reviewer for Web of Science–indexed journals (Q1, Q2) and has professional experience with IBM, Tata Consultancy Services, VISA (UK), and the Indian Statistical Institute, contributing to applied AI projects in healthcare and computer vision. He is a recipient of the Russian ``Open Doors'' Olympiad Scholarship in Mathematics and Artificial Intelligence and the Computer Society of India ``Young IT Professional Award.''\\

\textbf{Maxine Tan} is a Senior Lecturer in the School of Engineering at Monash University Malaysia. Her research focuses on quantitative medical image analysis and computer-aided diagnosis for cancer risk prediction, prognosis, and treatment response, particularly in breast, lung, and ovarian cancers. Her work includes deep learning and radiomics applied to CT and MRI imaging. She has secured multiple local and international research grants and published more than 50 peer-reviewed articles in leading medical imaging journals, including \emph{Medical Image Analysis}, \emph{IEEE Transactions on Medical Imaging}, \emph{Pattern Recognition}, \emph{Breast Cancer Research}, \emph{Journal of Magnetic Resonance Imaging}, \emph{Artificial Intelligence in Medicine}, and \emph{IEEE Transactions on Biomedical Engineering}. Her contributions have received several recognitions, including the SPIE Medical Imaging Cum Laude (Best Poster) Award, an Honorable Mention Award in Computer-Aided Diagnosis and Digital Pathology, and the Underwriters Laboratories ASEAN–U.S. Science Prize for Women (Mid-Career Scientist category).

\bibliographystyle{IEEEtran}

\end{document}